\newtheorem{theorem}{Theorem}
\newtheorem{proposition}[theorem]{Proposition}
\newenvironment{proof}[1][Proof]{\noindent\textbf{#1.} }{\ \rule{0.5em}{0.5em}}
\begin{document}

\title{Nonadditive entropy for random quantum spin-S chains}

\author{A. Saguia}
\ead{amen@if.uff.br}

\author{M. S. Sarandy}
\ead{msarandy@if.uff.br}

\address{Instituto de F\'{\i}sica, Universidade Federal Fluminense,
Av. Gal. Milton Tavares de Souza s/n, Gragoat\'a, 24210-346, Niter\'oi, RJ, Brazil.}

\begin{abstract}
We investigate the scaling of Tsallis entropy in disordered quantum spin-S chains. We show that 
an extensive scaling occurs for specific values of the entropic index. Those values depend only 
on the magnitude S of the spins, being directly related with the effective central charge associated 
with the model.

\end{abstract}

\begin{keyword}
Quantum Spin Chain; Disordered System; Nonextensive Statistical Mechanics.
\end{keyword}

\maketitle

\section{Introduction}

Correlations among parts of a quantum system are behind remarkable phenomena, 
such as a quantum phase transition (QPT)~\cite{Sachdev:book,Continentino:book}. 
In particular, the relationship between correlations and QPTs is revealed by 
the behavior of entanglement at criticality as measured, e.g., by the von Neumann entropy 
(see, for instance, Ref.~\cite{Amico:08}). Given a quantum system in a pure state $|\psi\rangle$ 
and a bipartition of the system into two subsystems $A$ and $B$, the von Neumann entropy between 
$A$ and $B$ reads
\begin{equation}
{\cal S}=-\textrm{Tr} \left( \rho_A \ln \rho_A \right) = -\textrm{Tr} 
\left( \rho_B \ln \rho_B \right),
\label{vonNeumann}
\end{equation}
where $\rho_A=\textrm{Tr}_B \rho$ and $\rho_B = \textrm{Tr}_A \rho$ denote the reduced 
density matrices of $A$ and $B$, respectively, with $\rho=|\psi\rangle\langle \psi|$. 
If $A$ and $B$ are probabilistic independent (such that $\rho = \rho_A \otimes \rho_B$), the von 
Neumann entropy is additive, i.e., ${\cal S}_{AB} = {\cal S}_A + {\cal S}_B$. As a consequence, 
${\cal S}$ is extensive for uncorrelated subsystems, namely, ${\cal S}(L) \propto L$, where $L$ 
denotes the size of a block of the system. On the other hand, ${\cal S}$ becomes nonextensive in presence 
of correlations. Indeed, for critical systems in one dimension, which are known to be highly entangled, 
conformal invariance implies a diverging logarithmic scaling given by ${\cal S}(L) \propto (c/3) \ln L$ 
(or, more specifically, ${\cal S}(L) = (c/3) \ln L + \textrm{constant}$),
where $c$ is the central charge associated with the Virasoro algebra of the underlying conformal field 
theory~\cite{Vidal:03,Korepin:04,Calabrese:04}.
For noncritical (gapful) systems in one dimension, entanglement 
saturates at a constant value $k$, i.e., ${\cal S}(L) \rightarrow k$ as $L \rightarrow \infty$. 
More generally, for higher dimensions, noncritical systems are expected to obey the area law, which 
implies that the von Neumann entropy of a region scales as the surface area of the region instead 
of the volume of the region itself. In other words, the area law establishes that 
${\cal S}(L) \propto L^{D-1}$ ($L\rightarrow \infty$), where $D$ is the dimension of the system.     

Remarkably, it has recently been shown in Refs.~ \cite{Tsallis:05,Caruso:08} that a quantum system 
may exhibit specific probability correlations among its parts such that an extensive entropy can be 
achieved even for highly correlated subsystems. This has been obtained by generalizing the von Neumann 
entropy into the nonadditive Tsallis q-entropy~ \cite{Tsallis:88,Tsallis:book}
\begin{equation}
{\cal S}_q [\rho]= \frac{1}{1-q} \left(\textrm{Tr} \rho^q - 1 \right), 
\label{qentropy}
\end{equation}
with $q \in \mathbb{R}$. One can show that the von Neumann entropy is a particular case of Eq.~(\ref{qentropy}) by taking $q=1$. 
Tsallis entropy has been successfully applied to handle a variety of physical systems, 
in particular those exhibiting long-range interactions. Recent experimental results for its predictions can 
be found, e.g., in Refs.~\cite{Douglas:06,Liu:08}. In Tsallis statistics, the additivity of the von Neumann 
entropy for independent subsystems is replaced by the pseudo-additivity relation of the ${\cal S}_q$ entropy
\begin{eqnarray}
{\cal S}_q [\rho_A \otimes \rho_B] = {\cal S}_q [\rho_A] + {\cal S}_q [\rho_B] \nonumber \\
+\left(1-q\right) {\cal S}_q [\rho_A] {\cal S}_q [\rho_B].
\label{pseudo-add}
\end{eqnarray}
The investigation of ${\cal S}_q$ in conformal invariant quantum systems has revealed that the extensivity of the entropy 
can be achieved for a particular choice $q_{ext}$ of the entropic index $q$ in Eq.~(\ref{qentropy}). In particular, 
$q_{ext}$ is directly associated with the central charge $c$. More specifically, the extensivity of ${\cal S}_q$ 
occurs for~\cite{Caruso:08}
\begin{equation}
q_{ext} = \frac{\sqrt{9+c^2} - 3}{c}. 
\label{qext}
\end{equation}
The aim of this work is to consider the scaling of the nonadditive entropy ${\cal S}_q$ and, consequently, its 
extensivity in quantum critical spin chains under the effect of disorder into the exchange couplings among the spins. 
Indeed, disorder appears as an essential feature in a number of condensed matter systems, motivating a great 
deal of theoretical and experimental research (see, e.g., Refs.~\cite{Young:98,Igloi:05}). In particular, 
it is well known that, in the case of a spin-S random exchange Heisenberg antiferromagnetic chain (REHAC), 
disorder can drive the system to the so-called random singlet phase (RSP), which is a gapless phase described   
by spin singlets distributed over arbitrary distances~\cite{Fisher:94}. In recent years, it has been observed that the 
entanglement entropy in critical random spin chains displays a logarithmic scaling that closely resembles the 
behavior of pure (non-disordered) systems. Indeed, for a block of spins of length $L$, we have that the von 
Neumann entropy reads ${\cal S}(L) \propto (c_{eff}/3) \ln L$, where $c_{eff}$ is an effective central charge 
that governs the scale of the entropy~\cite{Refael:04}. Moreover, it has been shown that in the case of the RSP, 
$c_{eff}$ is determined solely in terms of the magnitude $S$ of the spin in the chain~\cite{Saguia:07,Bonesteel:07,Refael:07} 
(see Ref.~\cite{Refael:09} for a review of entanglement in random systems and Ref.~\cite{others} for other connected results).
Here, we will show that the extensivity of ${\cal S}_q$ can also be obtained for random critical spin chains, with $q_{ext}$ governed by 
$c_{eff}$. Hence, $q_{ext}$ will be given as a unique function of the spin $S$. Moreover, as we will see, around the extensivity 
point $q_{ext}$, ${\cal S}_q(L) \propto L^\gamma$, with the exponent $\gamma$ of the power law given by a quadratic function of $q$. 
 
\section{Nonadditive entropy for a set of random singlets}

We begin by considering the typical arrange of a quantum spin-S chain in the RSP, 
which is provided by a set of spin singlets distributed over arbitrary distances, 
as sketched by Fig.~\ref{f1}. 
\begin{figure}[th]
\centering {\includegraphics[angle=0,scale=0.38]{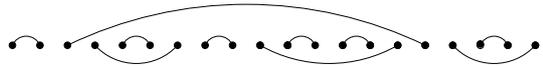}}
\caption{A schematic picture of the RSP. Spin singlets are composed randomly at arbitrary distances.}
\label{f1}
\end{figure}

In order to evaluate ${\cal S}_q$ in the RSP, we begin by considering a number $n$ of singlets connecting   
a contiguous block composed by $L$ spins with the rest of the chain. In this situation, the pseudo-additivity 
of ${\cal S}_q$ implies that Tsallis entropy is given by the Proposition below.   
\begin{proposition}
For a bipartite system composed of a number $n$ of spin-S singlets connecting two blocks, with $n \in \mathbb{N}$, 
Tsallis entropy ${\cal S}_q^{(n)}$ for each block is given by 
\begin{equation}
{\cal S}_q^{(n)} = \frac{1}{1-q} \left[ \left(2S+1)^{n(1-q)} - 1 \right) \right] .
\label{qe-singlets}
\end{equation}
\end{proposition}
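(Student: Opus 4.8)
The plan is to reduce the claim to a one-line computation with a maximally mixed reduced density matrix, exploiting the structure of the random-singlet state sketched in Fig.~\ref{f1}, and then to lift that computation to $n$ singlets either directly or inductively via the pseudo-additivity relation~(\ref{pseudo-add}).

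First I would treat a single spin-$S$ singlet shared between the two blocks. Two spin-$S$ spins coupled into the total-spin-zero state admit a Schmidt decomposition with $2S+1$ equal coefficients $1/\sqrt{2S+1}$, so tracing out the half of the singlet sitting in the complementary block leaves the remaining spin in the maximally mixed state $\rho=\mathbb{1}/(2S+1)$. Hence $\textrm{Tr}\,\rho^q=(2S+1)(2S+1)^{-q}=(2S+1)^{1-q}$, and Eq.~(\ref{qentropy}) gives the $n=1$ case of~(\ref{qe-singlets}), namely ${\cal S}_q^{(1)}=\frac{1}{1-q}\big[(2S+1)^{1-q}-1\big]$.

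Because the $n$ singlets connecting the two blocks are mutually independent, the reduced state of a block factorizes, $\rho=\bigotimes_{i=1}^{n}\mathbb{1}/(2S+1)$, i.e.\ it is maximally mixed on a Hilbert space of dimension $(2S+1)^{n}$; then $\textrm{Tr}\,\rho^q=(2S+1)^{n(1-q)}$ and substitution into~(\ref{qentropy}) yields~(\ref{qe-singlets}) at once. Alternatively — and this is the route that genuinely exercises pseudo-additivity — I would induct on $n$: setting $u_n\equiv 1+(1-q)\,{\cal S}_q^{(n)}$, relation~(\ref{pseudo-add}) applied to $\rho=\rho_{n-1}\otimes\rho_{1}$ collapses to the multiplicative recursion $u_n=u_{n-1}u_1$, whence $u_n=u_1^{\,n}$; combined with $u_1=(2S+1)^{1-q}$ from the base case this gives $u_n=(2S+1)^{n(1-q)}$, which is exactly~(\ref{qe-singlets}).

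I do not anticipate a real obstacle. The only point demanding care is the base case: one must verify that half of a spin-$S$ singlet is genuinely maximally mixed with full rank $2S+1$, and that distinct singlets act on disjoint factors so their reduced states tensor-multiply — both follow directly from the RSP structure of Fig.~\ref{f1}. As a consistency check one can let $q\to 1$ and recover the von Neumann value ${\cal S}^{(n)}=n\ln(2S+1)$, which will later be the ingredient that, after averaging over singlet configurations, produces the logarithmic scaling in the block length $L$.
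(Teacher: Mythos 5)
Your proposal is correct and follows essentially the same route as the paper: the base case via the maximally mixed $(2S+1)$-dimensional reduced density matrix of a single singlet, followed by induction on $n$ using the pseudo-additivity relation~(\ref{pseudo-add}). Your substitution $u_n = 1+(1-q)\,{\cal S}_q^{(n)}$ and the alternative direct computation with $\rho=\bigotimes_{i=1}^{n}\mathbb{1}/(2S+1)$ are tidy reformulations, but they do not change the substance of the argument.
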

\begin{proof}
The proof can be obtained by finite induction. Indeed, the single-site reduced density 
operator $\rho_A$ for a spin-S singlet can be represented by a D-dimensional diagonal 
matrix given by $\rho_A = \textrm{diag}\left(D^{-1},D^{-1},\cdots,D^{-1}\right)$, 
with $D = 2S+1$. Therefore, from Eq.~(\ref{qentropy}), we obtain that 
${\cal S}_q^{(1)} = (1-q)^{-1} (D^{1-q} - 1)$. For two singlets, the pseudo-additivity of ${\cal S}_q$ given 
by Eq.~(\ref{pseudo-add}) implies that ${\cal S}_q^{(2)} = (1-q)^{-1} (D^{2(1-q)} - 1)$. By taking  
the general expression for the entropy for $n$ singlets as ${\cal S}_q^{(n)} = (1-q)^{-1} (D^{n(1-q)} - 1)$, 
we obtain for $(n+1)$ singlets that ${\cal S}_q^{(n+1)} = (1-q)^{-1} (D^{(n+1)(1-q)} - 1)$. Hence, 
Eq.~(\ref{qe-singlets}) holds for any $n \in \mathbb{N}$.
\end{proof}

Tsallis entropy for the RSP can then be obtained by numerically averaging ${\cal S}_q^{(n)}$ over a sample of random 
couplings along the chain. These random configurations are generated by following a gapless probability distribution, which 
drives the system to the RSP, with the entropy of each configuration computed by counting the spin singlets via a renormalization 
group approach described in the next section.

\section{Renormalization group method for random spin systems}

The RSP can be conveniently handled via a perturbative real-space renormalization group method introduced by Ma, Dasgupta 
and Hu (MDH)~\cite{MDH1,MDH2}, which was successfully applied to the spin-1/2 REHAC. This approach was proven to be 
asymptotically exact, which allowed for a fully characterization of the properties of the RSP~\cite{Fisher:94}. 
Considering a set of random Heisenberg antiferromagnetic interactions $J_i$ between neighbouring spins $S_i$ and $S_{i+1}$, 
the original MDH method consists in finding the strongest interaction $\Omega$ between a pair of spins 
($S_2$ and $S_3$ in Fig. 2a) and treating the couplings of this pair with its neighbors 
($J_1$ and $J_2$ in Fig. 2a) as a perturbation. Diagonalization of the strongest bond leads at zeroth 
order in perturbation theory to a singlet state between the spins coupled by $\Omega$. Then, the singlet is 
decimated away and an effective interaction $J^\prime$ is perturbatively evaluated. By iteratively applying 
this procedure, the low-energy behavior of the ground state will be given by a collection of singlet pairs 
and the structure of the RSP will naturally appear. 

\begin{figure}[th]
\centering {\includegraphics[angle=0,scale=0.4]{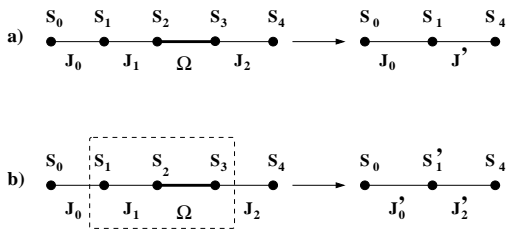}}
\caption{Modified MDH renormalization procedure for spin-$S$ chains.}
\label{f2}
\end{figure}

Unfortunately, when generalized to higher spins, this method, at least in its simplest version, revealed to 
be ineffective. The reason is that, after the elimination procedure of the strongest bond $\Omega$, the effective 
interaction $J^{\prime}$ may be greater than $\Omega$. Then, the problem becomes essentially non-perturbative 
for arbitrary distributions of exchange interactions. For instance, considering the REHAC 
with arbitrary spin-$S$, the renormalized coupling is given by the recursive relation~\cite{Boechat:96}
\begin{equation}
J^{\prime}=\frac{2}{3} S(S+1) \frac{J_1 J_2}{\Omega}.
\label{JPrimeH}
\end{equation}
Notice that, for $S \ge 1$, the renormalization factor is $(2/3)S(S+1)>1$, resulting in the breakdown of 
perturbation theory. In order to solve this problem, a generalization of the MDH method was proposed in 
Refs.~\cite{Saguia:02,Saguia:03} (for other proposals, see also~\cite{Hyman:97,Monthus:97}). This generalized 
MDH method consists in either of the following procedures shown in Fig.~\ref{f2}. Taking the 
case of the Heisenberg chain as an example, if the largest neighboring interaction to $\Omega$, say $J_1$, is 
$J_1 < 3 \Omega /[2S(S+1)]$, then we eliminate the strongest coupled pair obtaining an effective interaction 
between the neighbors to this pair which is given by Eq.~(\ref{JPrimeH}) (see Fig.~2a). This new effective 
interaction is always smaller than those eliminated. Now suppose $J_1 > J_2$ and $J_1> 3 \Omega /[2S(S+1)]$. 
In this case, we consider the {\em trio} of spins-$S$ coupled by the two strongest interactions of the trio, 
$J_1$ and $\Omega$ and solve it exactly (see Fig.~2b). This trio of spins is then substituted by one effective 
spin interacting with its neighbors through new renormalized interactions obtained by degenerate perturbation 
theory for the ground state of the trio. This method has been used to successfully investigate the quantum phase 
diagram of the spin-1~\cite{Saguia:02} and spin-3/2~\cite{Saguia:03} Heisenberg spin chains and will be here 
applied to the computation of the nonadditive entropy.

\section{Scaling of nonadditive entropy in the RSP}

Let us apply now the generalized MDH approach to analyze the behavior of ${\cal S}_q$ in the RSP of  
antiferromagnetic spin chains. We start with a REHAC, whose Hamiltonian is given by
\begin{equation}
H_{Heis} = \sum_{i=1}^{N} J_i \, \overrightarrow{S}_i \cdot \overrightarrow{S}_{i+1} 
\label{HH}
\end{equation}
where $\{J_i\}$ are random exchange couplings obeying a probability distribution $P(J)$ and 
$\{\overrightarrow{S}_i\}$ are spin-S operators,
with periodic boundary conditions adopted.
The numerical investigation of ${\cal S}_q$ is 
performed as follows. We begin by considering spin chains with $200,000$ sites, whose couplings $\{J_i\}$ 
are randomly generated by using a gapless power law distribution $P(J) \propto J^{-0.8}$, 
for which trio renormalizations are negligible. Results are then obtained 
by averaging over a sample of $M = 40,000$ configurations for $\{J_i\}$~\footnote{It is worth 
observing that a considerably larger amount of configurations is demanded for the evaluation of 
Tsallis entropy in comparison with the von Neumann entropy (see,  e.g., Ref.~\cite{Saguia:07}).}.
For each random configuration $j$, we decimate the spins out via the generalized MDH technique and compute  
the number $n_j$ of singlets that cross a block of length $L$. The number $n_j$ is counted by 
following Ref.~\cite{Saguia:07}. In turn, if a singlet is
decimated and the spins composing the singlet are in different blocks, this singlet adds one to the 
total number $n_j$. On the other hand, in the case of a trio elimination, nothing is added to $n_j$, 
since one effective spin is returned to the chain (see also Ref.~\cite{Refael:04} for a similar 
approach). The decimation procedure is iterated until the elimination of all spins in the chain. 
After all configurations computed, Tsallis entropy is then obtained by averaging over all 
of them, i.e.,
\begin{equation}
{\cal S}_q = \sum_{j=1}^{M} \frac{{\cal S}_q^{(n_j)}}{M}, 
\end{equation}
with ${\cal S}_q^{(n_j)}$ given by Eq.~(\ref{qe-singlets}). 
The results for spin-1/2, spin-1, and spin-3/2 REHACs are shown in Figs.~\ref{f3}-\ref{f5}. 
\begin{figure}
\centering {\includegraphics[angle=0,scale=0.32]{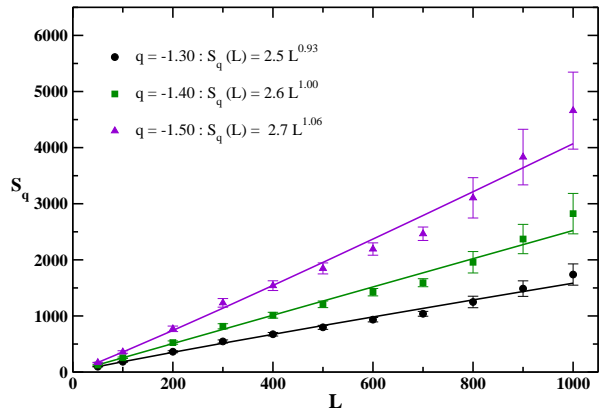}}
\caption{(Color online) Nonadditive entropy ${\cal S}_q(L)$ as a function of the block size $L$ for 
a spin-1/2 REHAC for $q=-1.30$ until $q=-1.50$.}
\label{f3}
\end{figure}
\begin{figure}
\centering {\includegraphics[angle=0,scale=0.32]{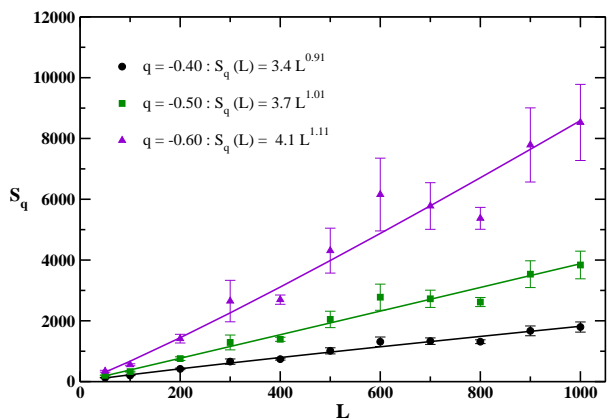}}
\caption{(Color online) Nonadditive entropy ${\cal S}_q(L)$ as a function of the block size $L$ for 
a spin-1 REHAC for $q=-0.40$ until $q=-0.60$.}
\label{f4}
\end{figure}
Note that an extensive ${\cal S}_q$ can be found for specific negative values $q_{ext}$ in the examples 
above. These values are clearly distinct for different spin magnitudes $S$. In turn, a remarkable fact 
is that $q_{ext}$ depends only on $S$ for a system in the RSP and not on the specific model 
considered. In fact, let us consider the spin-1 random exchange biquadratic antiferromagnetic chain, 
whose Hamiltonian is given by 
\begin{equation}
H_{Biq} = \sum_{i=1}^{N} J_i \left(\overrightarrow{S}_i \cdot \overrightarrow{S}_{i+1}\right)^2 .
\label{HB}
\end{equation}
Application of the generalized MDH procedure here results
only in the formation of singlets, with the renormalized exchange
coupling reading~\cite{Boechat:96}
\begin{equation}
J^{\prime}=\frac{2}{9} \frac{J_1 J_2}{\Omega}.
\label{JPrimeB}
\end{equation} 
where $J_1$ and $J_2$ are the nearest neighbors of the strongest bond $\Omega$. 
As in the Heisenberg case, we consider spin chains with $200,000$ sites, whose couplings $\{J_i\}$ 
are randomly generated by using a gapless power law distribution $P(J) \propto J^{-0.8}$ and 
averaged over a sample of $M = 40,000$ configurations for $\{J_i\}$.
This model produces 
the same result for $q_{ext}$ as the spin-1 REHAC in the RSP. In order to illustrate 
this equivalence and obtain a numerical value for $q_{ext}$, let us consider the relationship between 
${\cal S}_q$ and $L$ close to the extensivity index $q_{ext}$. 
\begin{figure}
\centering {\includegraphics[angle=0,scale=0.32]{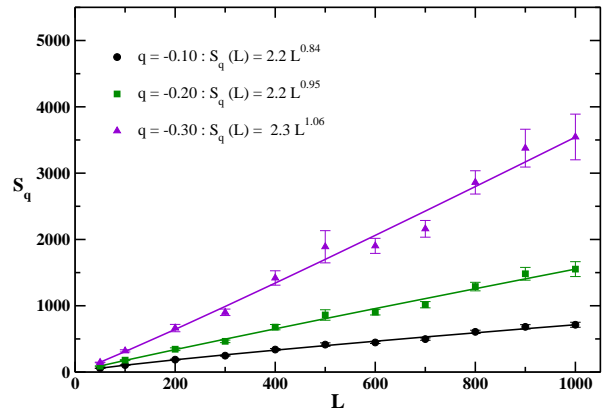}}
\caption{(Color online) Nonadditive entropy ${\cal S}_q(L)$ as a function of the block size $L$ for 
a spin-3/2 REHAC for $q=-0.1$ until $q=-0.3$.}
\label{f5}
\end{figure}
\begin{figure}
\centering {\includegraphics[angle=0,scale=0.32]{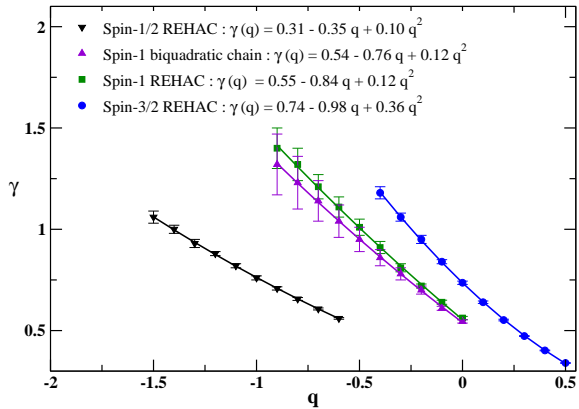}}
\caption{(Color online) Exponent $\gamma$ of the power law ${\cal S}_q (L)\propto L^\gamma$ as a 
function of $q$. We fit $\gamma$ as a quadratic function of $q$, with the extensivity index $q_{ext}$ 
varying only with the magnitude $S$ of spin in the chain. In particular, note the collapse of the 
spin-1 REHAC and biquadratic curves, indicating that $q_{ext}$ is uniquely determined by $S$.}
\label{f6}
\end{figure}
As indicated in Figs.~\ref{f3}-\ref{f5}, we obtain that ${\cal S}_q \propto L^\gamma$
around the extensive point $\gamma=1$. Indeed, the aim of Figs.~\ref{f3}-\ref{f5} is precisely 
show this dependence as $q$ is varied across extensivity.  
Moreover, as shown in Fig.~\ref{f6} for the   
REHACs (with different spin magnitudes $S$) as well as for the spin-1 biquadratic chain, the exponent 
$\gamma$ shows a quadratic dependence on $q$, namely, 
\begin{equation}
\gamma = u q^2 + v q+ w, 
\label{gamma-q}
\end{equation}
with $u,v,w \in \mathbb{R}$. 
Note in Fig.~\ref{f6} the collapse of the spin-1 curves for different models, which indicates that 
$q_{ext}$ is a unique function of the spin of the chain. Therefore, we can associate $q_{ext}$ 
with the effective central charge that governs the entanglement scaling in 
the RSP, which is given by~\cite{Saguia:07}
\begin{equation}
c_{eff} = \ln\left(2S+1\right).
\label{ceff}
\end{equation}
Moreover, $q_{ext}$ can be directly obtained by imposing $\gamma = 1$ in Eq.~(\ref{gamma-q}). 
The values for $q_{ext}$ as well as their relationship with 
the effective central charge are summarized in Table~\ref{t1}. 
Concerning the estimation of the error bar $\Delta q_{ext}$, it can be directly obtained from a standard error propagation procedure. 
Indeed, from $\Delta \gamma = \left| \partial \gamma / \partial q \right| \Delta q$, we obtain 
$\Delta q_{ext} = \Delta\gamma / \left| 2\, u\, q_{ext} + v\right|$.
\begin{table}[ht]
\centering
\begin{tabular}{c c c c}
\hline
Model & $q_{ext}$  & $c_{eff}$   \\ \hline \hline
Spin-$1/2$ REHAC  & \,\,$-1.40 \pm 0.03$\,\, & \,\,$\ln 2$\,\, \\ \hline
Spin-$1$ REHAC    & \,\,$-0.49 \pm 0.04$\,\, & \,\,$\ln 3$ \,\, \\ \hline
Spin-$1$ biquadratic & \,\,$-0.55 \pm 0.06$\,\, & \,\,$\ln 3$ \,\, \\ \hline
Spin-$3/2$ REHAC  & \,\,$-0.25 \pm 0.02$\,\, & \,\,$\ln 4$ \,\, \\ \hline
\end{tabular}
\caption[table1]{Entropic indices $q_{ext}$ that yield the extensivity of ${\cal S}_q$ for different magnitudes $S$ and their 
corresponding effective central charges.}
\label{t1}
\end{table}
Although the analytical relation between $q_{ext}$ and $S$ (and therefore $c_{eff}$) remains elusive at the present stage, it is 
reasonable to suppose that $q_{ext}$ monotonically increases to unity (Boltzmann-Gibbs-von Neumann entropy) 
as $S$ increases to infinity, where the classical limit is obtained. This behavior is indeed displayed in Fig.~\ref{f7} in 
terms of $1/c_{eff}$ for REHACs in the RSP. 
\begin{figure}
\centering {\includegraphics[angle=0,scale=0.3]{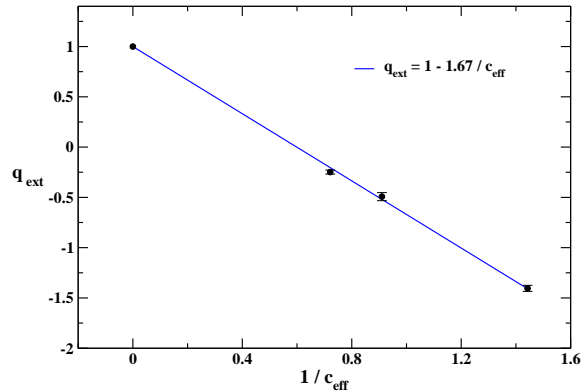}}
\caption{(Color online) Extensivity index $q_{ext}$ as a function of the effective central charge $c_{eff}$ for REHACs in the RSP.}
\label{f7}
\end{figure}
In particular, note that a linear behavior can be inferred between $q_{ext}$ and $1/c_{eff}$, which reads 
\begin{equation}
q_{ext} = 1 - \frac{1.67}{c_{eff}}.
\label{qext-c}
\end{equation}
Therefore, as given by Eq.~(\ref{qext}) for the pure case, we can also determine an expression for $q_{ext}$ in terms of $c_{eff}$ 
for disordered systems, which reinforces the universal properties of the entropic index $q$ in critical systems. 

\section{Conclusions}

In summary, we have investigated the scaling of Tsallis entropy ${\cal S}_q$ in spin-S random 
critical quantum spin chains. By focusing on the RSP, we have shown that, for specific values 
$q_{ext}$ of the entropic index, ${\cal S}_q$ becomes an extensive quantity, which reconciles 
the quantum scaling with the Clausius-like prescription for classical thermodynamics. 
It is important to emphasize that the extensivity of the nonadditive entropy does not imply  
absence of correlations between the parts of the quantum system. In this context, there is no 
contradiction with the behavior of block entanglement as measured by the von Neumann entropy.  
Remarkably, 
$q_{ext}$ is directly associated with the effective central charge $c_{eff}$ that governs 
bipartite entanglement in random spin chains, which means that $q_{ext}$ is solely determined by 
the magnitude $S$ of the spin in the chain.  Moreover, we have inferred a linear algebraic relationship 
between $q_{ext}$ and $1/c_{eff}$. An analytical investigation of this relationship as well as the 
behavior of $q_{ext}$ in other critical phases of random spin chains (e.g., Griffiths phase) are 
challenges under research. 
Moreover, the scaling of ${\cal S}_q$ far from the extensivity point is also relevant in connection 
with the scaling of ${\textrm{Tr}} \rho^q$ (see, e.g., Refs.~\cite{Cirac:09,Calabrese:10}). We intend to address such 
topics in a future work.

\subsection*{Acknowledgments}

We thank Prof. C. Tsallis and Dr. B. Coutinho for helpful discussions. 
This work was supported by the Brazilian agencies MCT/CNPq and FAPERJ, being 
performed as part of the Brazilian National Institute for Science and Technology 
of Quantum Information (INCT-IQ).  

\end{document}